\DeclareMathOperator{\tr}{tr}
\newcommand{\ket}[1] {| #1 \rangle}
\newcommand{\proj}[1]{ | #1 \rangle\!\langle #1 |}
\newcommand{\id}{\mathds{1}}
\newcommand{\Ket}[1] {| #1 \rangle\!\rangle}
\newcommand{\KetBra}[2]{ | #1 \rangle\!\rangle\!\langle\!\langle #2 |}
\newcommand{\Hi}{\mathcal{H}}
\newcommand{\LO}{\mathcal{L}}
\newcommand{\ba}{\begin{eqnarray}}
\newcommand{\ea}{\end{eqnarray}}
\newcommand{\be}{\begin{equation}}
\newcommand{\ee}{\end{equation}}
\newtheorem{thm}{Theorem}
\newtheorem{lem}[thm]{Lemma}
\newtheorem*{lem*}{Lemma}
\newtheorem{definition}[thm]{Definition}
\def\pr#1#2{P_{#1}(a_{#2}|x_{#2})} 
\def\prv#1#2{P_{#1}(\vec{a}_{#2}|\vec{x}_{#2})}
\begin{document}

\title{Multipartite Causal Correlations: Polytopes and Inequalities}

\author{Alastair A. Abbott}
\affiliation{Institut Néel, CNRS and Universit\'e Grenoble Alpes, 38042 Grenoble Cedex 9, France}
\author{Christina Giarmatzi}
\affiliation{Centre for Engineered Quantum Systems, School of Mathematics and Physics, The University of Queensland, St. Lucia, QLD 4072, Australia}
\affiliation{Centre for Quantum Computation and Communication Technology, School of Mathematics and Physics, The University of Queensland, St. Lucia,   QLD 4072, Australia}
\author{Fabio Costa}
\affiliation{Centre for Engineered Quantum Systems, School of Mathematics and Physics, The University of Queensland, St. Lucia, QLD 4072, Australia}
\author{Cyril Branciard}
\affiliation{Institut Néel, CNRS and Universit\'e Grenoble Alpes, 38042 Grenoble Cedex 9, France}

\date{\today}

\begin{abstract}
We consider the most general correlations that can be obtained by a group of parties whose causal relations are well-defined, although possibly probabilistic and dependent on past parties' operations. We show that, for any fixed number of parties and inputs and outputs for each party, the set of such correlations forms a convex polytope, whose vertices correspond to deterministic strategies, and whose (nontrivial) facets define so-called causal inequalities. We completely characterize the simplest tripartite polytope in terms of its facet inequalities, propose generalizations of some inequalities to scenarios with more parties, and show that our tripartite inequalities can be violated within the process matrix formalism, where quantum mechanics is locally valid but no global causal structure is assumed.
\end{abstract}

\maketitle

\section{Introduction}

One of the most surprising features of quantum mechanics is that it generates correlations that cannot be obtained with classical systems. The most studied scenario involves parties that cannot communicate with each other. As first proved by Bell~\cite{bell64}, two such parties sharing entangled quantum states can generate nonsignaling correlations that are not achievable with classical resources.
The no-signaling constraint corresponds to a particular \emph{causal structure}, where all correlations are due to a common cause. There is a growing interest in studying and characterizing more general causal structures~\cite{BranciardBilocal2012, FritzBeyond2012, ChavesCausal2014, fritzbeyond2015} and understanding when the corresponding correlations can be produced with classical or quantum systems~\cite{Henson2014, pienaar2016causal}.

But what are the most general correlations achievable in \emph{any} causal structure? And can quantum mechanics generate even more general correlations not compatible with any definite causal structure? This question was considered in~\cite{oreshkov12}, where a framework was developed that assumes the validity of quantum mechanics in local laboratories, with no assumptions about the causal structure in which the laboratories are embedded. It was found that such a framework allows for correlations that can violate \emph{causal inequalities}, constraints that are necessarily satisfied by correlations generated in any definite causal order. However, no clear physical interpretation was found for such \emph{`noncausal correlations'}. 

A physical process in which operations are performed `in a superposition' of causal orders was first proposed in~\cite{Chiribella:2013aa}. This process---the \emph{quantum switch}---can in principle provide advantages for computation~\cite{araujo14} and communication~\cite{feixquantum2015, guerin16}, and a first experimental proof of principle has been recently demonstrated~\cite{Procopio:2015aa}. However, the quantum switch requires device-dependent tests to detect its `causal nonseparability'~\cite{araujo15} (i.e.,\ the lack of definite causal order) and it cannot be used to violate any causal inequality~\cite{araujo15, oreshkov15}.
Interestingly, the quantum switch requires the coordinated action of three parties to detect causal nonseparability, while none of the known bipartite examples of causally nonseparable processes seem to have physical interpretations. This motivates a systematic study of multipartite scenarios: if there is a process in nature that can violate causal inequalities, it may indeed require more than two parties.

Here we consider general scenarios involving a finite but arbitrary number of parties with finite numbers of possible inputs and outputs, and prove that \emph{`causal correlations'}---those that can be generated in a well-defined causal structure, be it fixed or dynamical, deterministic or probabilistic---form a convex polytope whose vertices correspond to deterministic strategies. That is, any causal correlation can be expressed as a probabilistic mixture of deterministic causal strategies. These are strategies where the output of each party is a deterministic function of its own input and of the inputs of parties in its past, and where the causal relations amongst a set of parties are functions of the inputs of the parties in the past of that set.
We further completely characterize the simplest nontrivial polytope for three parties in terms of its facets, which define causal inequalities. We interpret some of these inequalities in terms of device-independent `causal games'~\cite{oreshkov12}, for which the probability of success has a nontrivial upper bound whenever the parties are constrained by a definite causal order. We also generalize some of these inequalities to the $N$-partite case. Finally, we show that all of these nontrivial tripartite inequalities can be violated within the process matrix formalism of Ref.~\cite{oreshkov12}.

\section{Multipartite causal correlations}

\subsection{Scenarios and notation}

In this paper we will consider situations where a finite number $N \ge 1$ of parties $A_k$ each receive an input $x_k$ from some finite set (which can in principle be different for each party), and generate an output $a_k$ that also belongs to some finite set (and which may also differ for each input). The fixed number of parties, of possible inputs for each party, and of possible outputs for each input, define together a `scenario'; throughout the paper we will always (often implicitly) assume that such a scenario is fixed. We define the vectors of inputs and outputs $\vec x = (x_1, \ldots, x_N)$ and $\vec a = (a_1, \ldots, a_N)$. The correlation established by the $N$ parties in a given scenario is then described by the conditional probability distribution $P(\vec a|\vec x)$.

For any (nonempty) subset ${\cal K} = \{k_1, \ldots, k_{|{\cal K}|}\}$ of ${\cal N} := \{1,\ldots,N\}$ with $|\cal K|$ elements, we shall denote by $\vec x_{\cal K} = (x_{k_1}, \ldots, x_{k_{|{\cal K}|}})$ and $\vec a_{\cal K} = (a_{k_1}, \ldots, a_{k_{|{\cal K}|}})$ the lists of inputs and outputs for the parties in ${\cal K}$. This will in particular allow us to consider marginal correlations and write, for instance (noting that $\vec a = (\vec a_{\cal K}, \vec a_{{\cal N} \backslash {\cal K}})$, up to a reordering of the parties), $P(\vec a_{\cal K}|\vec x)  = \sum_{\vec a_{{\cal N} \backslash {\cal K}}} \prv{}{}$. For ease of notation, a singleton ${\cal K} = \{k\}$ will simply be written $k$, and the vectors of inputs and outputs corresponding to the $(N{-}1)$ parties in ${\cal N} \backslash k$ (obtained by removing just one party, $A_k$) will simply be denoted by $\vec x_{\backslash k} = (x_1, \ldots, x_{k-1}, x_{k+1}, \ldots, x_N)$ and $\vec a_{\backslash k} = (a_1, \ldots, a_{k-1}, a_{k+1}, \ldots, a_N)$.

\subsection{Defining multipartite causal correlations}
\label{subsec:definition_causal_correlations}

We wish to investigate here the correlations that can be established in a scenario where each party's events---namely, the choice of an input and the generation of an output---happen within a well-defined causal structure, with well-defined causal relations between the parties.

The case of two parties is rather clear. The only possible causal relations are that $A_1$ causally precedes $A_2$---a case that we denote by $A_1 \prec A_2$, and which implies that $A_1$'s marginal probability distribution or `response function' should not depend on $A_2$'s input: $P(a_1|x_1,x_2) = P(a_1|x_1)$---or vice versa, $A_2 \prec A_1$, where $P(a_2|x_1,x_2) = P(a_2|x_2)$. The case where $A_1$ and $A_2$ are causally independent can be included in either $A_1 \prec A_2$ or $A_2 \prec A_1$, as it is compatible with both. As originally considered in Ref.~\cite{oreshkov12}, one may also allow for situations where the causal order is not fixed, but chosen probabilistically. A bipartite correlation that is compatible with $A_1 \prec A_2$, or $A_2 \prec A_1$, or a probabilistic mixture of the two is said to be \emph{`causal'}~\cite{oreshkov12,araujo15,brukner14,oreshkov15,branciard16}.

Moving now to three or more parties, more complex possibilities arise. Indeed, the action of the first party could control the causal relations of the following parties, in perfect agreement with the idea of a well-defined causal structure. For instance, if a party $A_1$ is first, they could decide to set $A_2$ before $A_3$ if their input is $0$, or $A_3$ before $A_2$ if it is $1$ (or they could choose the causal order between $A_2$ and $A_3$ as the result of a coin toss, where the coin's bias depends on the input). We should thus allow for such \emph{dynamical} causal orders~\cite{hardy2005probability,oreshkov15} (sometimes also referred to as \emph{adaptive} causal orders~\cite{baumeler13}) to establish the correlations we are interested in.

In any case, even allowing for dynamical causal orders, the compatibility with a definite causal structure will always require that one party acts first; which party this is can be chosen probabilistically, as in the bipartite case. The response function of that first party (say $A_k$) should then not depend on the other parties' inputs: $P(a_k|\vec x) = P(a_k|x_k)$. The action of that party would then determine the causal structure of the following parties, so that the correlation shared by the latter, conditioned on the input and output of the former, should also be compatible with a definite causal structure.
These observations lead us to introduce the following inductive definition for what we shall call \emph{`causal correlations'}:

\begin{definition}[Multipartite causal correlations]
\label{def:causal_correlations}
$ $

\begin{itemize}

\item For $N=1$, any valid probability distribution $P(a_1|x_1)$ is \emph{causal};

\item For $N \ge 2$, an $N$-partite correlation is \emph{causal} if and only if it can be decomposed in the form
\ba
P(\vec a|\vec x) = \sum_{k \in {\cal N}} \ q_k \ P_k(a_k|x_k) \ P_{k,x_k,a_k}(\vec a_{\backslash k}|\vec x_{\backslash k}) \,, \label{def:causal_correlation}
\ea
with $q_k \ge 0$ for each $k$, $\sum_k q_k = 1$, where (for each $k$) $P_k(a_k|x_k)$ is a single-party (and hence causal) probability distribution and (for each $k, x_k, a_k$) $P_{k,x_k,a_k}(\vec a_{\backslash k}|\vec x_{\backslash k})$ is a causal $(N{-}1)$-partite correlation.

\end{itemize}
\end{definition}

It is easy to check that this general definition is compatible with that for the bipartite case recalled above.
We note that the multipartite case was first investigated in Ref.~\cite{oreshkov15}. It was shown there that the above definition characterizes precisely the correlations that are compatible with the intuition about causality that (paraphrasing~\cite{oreshkov15}) the choice of input for one party cannot affect the outputs of other parties that acted before it (or which are not causally related, being neither in the past nor in the future of it), nor the causal order between those previous parties and the party in question. This also implies that causal correlations are those for which a classical `hidden variable' exists, whose value determines the causal order between all the parties, and signaling is only possible from parties in the causal past to those in the future according to the given causal order~\cite{oreshkov15}.
These arguments provide further justification to the above choice of definition for multipartite causal correlations.

\subsection{Basic properties of causal correlations}
\label{subsec:a_few_comments}

Let us mention some basic properties of this definition of multipartite causal correlations.
The proofs of the claims below are given in the Appendix.

\subsubsection{Convexity of causal correlations}
\label{subsubsec:convexity}

From the previous discussion it should be clear that any probabilistic mixture of causal correlations (for a given scenario) must also be causal, so that the set of causal correlations is convex. Although this is not immediately evident from the definition of Eq.~\eqref{def:causal_correlation}, it can indeed be shown to be the case.

\subsubsection{Ignoring certain parties}
\label{subsubsec:ignore_parties}

Any marginal correlation, for any subset of parties, of a causal correlation is causal.

More specifically, consider an $N$-partite causal correlation $\prv{}{}$ and a nonempty subset ${\cal K} \subset {\cal N}$.
Then the $|\cal K|$-partite correlation
\ba
P_{\vec x_{{\cal N} \backslash {\cal K}}}(\vec a_{\cal K}|\vec x_{\cal K}) := P(\vec a_{\cal K}|\vec x)  = \sum_{\vec a_{{\cal N} \backslash {\cal K}}} \prv{}{}
\ea
is causal for all $\vec x_{{\cal N} \backslash {\cal K}}$.

The correlation $P_{\vec x_{{\cal N} \backslash {\cal K}}}(\vec a_{\cal K}|\vec x_{\cal K})$ above is still conditioned on the \emph{inputs} of the $N{-}|{\cal K}|$ parties whose outputs are discarded. One can remove this dependence by averaging it out (for a given input distribution): by the convexity of causal correlations, the resulting $|\cal K|$-partite correlation $P(\vec a_{\cal K}|\vec x_{\cal K})$ remains causal. Note, on the other hand, that the correlation $P_{\vec x_{{\cal N} \backslash {\cal K}}, \vec a_{{\cal N} \backslash {\cal K}}}(\vec a_{\cal K}|\vec x_{\cal K})$, conditioned also on the \emph{outputs} of the discarded parties, is \emph{not} necessarily causal: post-selection indeed allows one to turn a causal correlation into a noncausal one.\footnote{To see this, consider for instance a bipartite `causal game' as in~\cite{oreshkov12,branciard16}, and add a third party to whom all inputs and outputs of the first two parties are sent, and who outputs $1$ if and only if the winning conditions for the causal game are met. Postselecting on that output, the bipartite correlation shared by the first two parties clearly wins the game perfectly (which implies that it is noncausal), although it could be established within a well-defined causal structure.}

\subsubsection{Combining causal correlations `one after the other'}

Consider two (nonempty) sets of parties ${\cal K}$ and ${\cal N} \backslash {\cal K}$, and two causal correlations: $P(\vec a_{\cal K}|\vec x_{\cal K})$ for the first set, and $P_{\vec x_{\cal K},\vec a_{\cal K}}(\vec a_{{\cal N} \backslash {\cal K}}|\vec x_{{\cal N} \backslash {\cal K}})$ for the second, which may depend on the inputs and outputs of the first set of parties (the parties in the set $\mathcal{K}$ are thus understood to `act before' those of the set ${\cal N} \backslash {\cal K}$). 
Then the $N$-partite correlation obtained by combining those in the form
\ba
P(\vec a|\vec x) := P(\vec a_{\cal K}|\vec x_{\cal K}) \ P_{\vec x_{\cal K},\vec a_{\cal K}}(\vec a_{{\cal N} \backslash {\cal K}}|\vec x_{{\cal N} \backslash {\cal K}})
\ea
is causal.

\subsubsection{An equivalent characterization of causal correlations}

An equivalent characterization, for the case $N \ge 2$, is that an $N$-partite correlation is causal if and only if it can be decomposed in the form (cf. also Ref.~\cite{oreshkov15})
\ba
P(\vec a|\vec x) = \sum_{\emptyset \varsubsetneq \cal K \varsubsetneq \cal N} q_{\cal K} \ P_{\cal K}(\vec a_{\cal K}|\vec x_{\cal K}) \ P_{{\cal K},\vec x_{\cal K},\vec a_{\cal K}}(\vec a_{{\cal N} \backslash {\cal K}}|\vec x_{{\cal N} \backslash {\cal K}}) \,, \nonumber \\[-3mm] \label{def:causal_correlation_v2}
\ea
where the sum runs over all nonempty strict subsets of ${\cal N} = \{1, \ldots, N\}$ (with $|{\cal K}| < N$ elements), with $q_{\cal K} \ge 0$ for each ${\cal K}$, $\sum_{\cal K} q_{\cal K} = 1$, where (for each ${\cal K}$) $P_{\cal K}(\vec a_{\cal K}|\vec x_{\cal K})$ is a $|{\cal K}|$-partite causal correlation, and (for each ${\cal K}, \vec x_{\cal K}, \vec a_{\cal K}$) $P_{{\cal K},\vec x_{\cal K},\vec a_{\cal K}}(\vec a_{{\cal N} \backslash {\cal K}}|\vec x_{{\cal N} \backslash {\cal K}})$ is an $(N{-}|{\cal K}|)$-partite causal correlation.

\medskip

This equivalent characterization implies in particular that a correlation of the form
\ba
P(\vec a|\vec x) = \sum_{k \in {\cal N}} \ q_k \ P_k(\vec a_{\backslash k}|\vec x_{\backslash k}) \ P_{k,\vec x_{\backslash k},\vec a_{\backslash k}}(a_k|x_k) \,, \label{def:correlation_party_last}
\ea
with $q_k \ge 0$ for each $k$, $\sum_k q_k = 1$, where (for each $k$) $P_k(\vec a_{\backslash k}|\vec x_{\backslash k})$ is a causal $(N{-}1)$-partite correlation and (for each $k, \vec x_{\backslash k},\vec a_{\backslash k}$) $P_{k,\vec x_{\backslash k},\vec a_{\backslash k}}(a_k|x_k)$ is a single-party probability distribution, is causal. (It is indeed obtained from~\eqref{def:causal_correlation_v2} by summing over the subsets ${\cal K} = {\cal N} \backslash k$, and relabelling certain subscripts ${\cal N} \backslash k$ to $k$.)
Compared to Eq.~\eqref{def:causal_correlation}, each term in the above sum distinguishes a given party that `comes \emph{last}', rather than \emph{first}. Note for instance that the correlations obtained from the so-called quantum switch~\cite{Chiribella:2013aa} are precisely of this form, and are hence causal; see Refs.~\cite{oreshkov15,araujo15}.

It should be emphasized however that for $N \ge 3$ the contrary is not true: not all causal correlations are of the form~\eqref{def:correlation_party_last}. Correlations with dynamical causal order, such as the one given below, provide counter-examples.

\subsubsection{Examples: fixed-order, mixtures of fixed orders, and dynamical-order causal correlations}
\label{subsubsec:examples}

Let us finish this section with some examples.

\medskip

The simplest example of a causal correlation one can think of is one that is compatible with a fixed causal order between all the parties that is independent of any party's input and output. For instance, a correlation compatible with the causal order $A_1 \prec A_2 \prec \dots \prec A_N$ can be written:
\ba
\prv{}{} &=& \pr{}{1}\ \pr{x_1,a_1}{2}\ \pr{x_1,x_2,a_1,a_2}{3}\notag \\ 
 && \qquad\qquad\quad \times \, \cdots \, \times\, \pr{\vec{x}_{\backslash N}, \vec{a}_{\backslash N}}{N},
\ea
which clearly satisfies the definition of a causal correlation given by Eq.~\eqref{def:causal_correlation}.

\medskip

Beyond this simplest case, by the convexity of the definition (see Sec.~\ref{subsubsec:convexity} above), any probabilistic mixture of fixed-order causal correlations is causal. For example, if the correlation $P_\sigma$ is compatible with the fixed order $A_{\sigma(1)} \prec A_{\sigma(2)} \prec \dots \prec A_{\sigma(N)}$ and $P_\tau$ is compatible with $A_{\tau(1)} \prec A_{\tau(2)} \prec \dots \prec A_{\tau(N)}$ (where $\sigma$ and $\tau$ are two permutations of $\{1, \ldots, N\}$), then for any $q \in [0,1]$, $P = q P_\sigma + (1{-}q) P_\tau$ is also causal. The interpretation is simply that, with probability $q$, the correlation is compatible with the fixed causal order defined by $\sigma$, while with probability $1{-}q$ it is compatible with $\tau$.

\medskip

For $N \ge 3$, this mixture of fixed-order causal correlations is not yet the most general type of causal correlation. Indeed, as discussed above, the inputs and outputs of the party (or parties) acting first could influence the causal order between the subsequent parties: the causal order can be \emph{dynamical}~\cite{oreshkov15}.
As a concrete example, consider for instance the tripartite scenario with binary inputs $0$, $1$ for all parties, a single fixed output for $A_1$ (which we can therefore ignore) and binary outputs $0$, $1$ for $A_2$ and $A_3$, and the following (deterministic) correlation:
\ba
P(a_2,a_3|x_1,x_2,x_3) &\ = \ & \delta_{x_1,0} \ \delta_{a_2,0} \ \delta_{a_3,x_2} \nonumber \\
&& + \ \delta_{x_1,1} \ \delta_{a_2,x_3} \ \delta_{a_3,0} \,,
\label{dynamical}
\ea
where $\delta$ is the Kronecker delta.
This example can be understood causally as follows (recall the discussion of Subsection~\ref{subsec:definition_causal_correlations}): the party $A_1$ acts first; their input ($0$ or $1$) then determines the causal order between the following two parties ($A_2 \prec A_3$ or $A_3 \prec A_2$, respectively), where the second party must always output $0$ (corresponding to $a_2=0$ or $a_3=0$, resp.) and the last party must produce the input of the second party ($a_3=x_2$ or $a_2=x_3$, resp.) as output. This correlation can thus be established in a well-defined, although dynamical, causal order and is thus causal. One can check that it is indeed of the form~\eqref{def:causal_correlation} (with only one term in the sum, for $k=1$), but not of the form~\eqref{def:correlation_party_last}: there is indeed no party that always acts last (note that, since the correlation is deterministic, the sum in~\eqref{def:correlation_party_last} would also need to have only one term, which would single out a fixed last party).

Finally, a generalization of the previous example is a situation in which the order between $A_2$ and $A_3$ is chosen probabilistically with a probability depending on the input of $A_1$. An example of this type is
\ba
P&&(a_2,a_3|x_1,x_2,x_3) = \nonumber \\
&& \quad \delta_{x_1,0}\left(q_0\,\delta_{a_2,0}\ \delta_{a_3,x_2} + \left(1-q_0\right)\, \delta_{a_2,x_3} \ \delta_{a_3,0}\right) \nonumber \\
&& \quad +\, \delta_{x_1,1}\left(q_1\,\delta_{a_2,0}\ \delta_{a_3,x_2} + \left(1-q_1\right)\, \delta_{a_2,x_3} \ \delta_{a_3,0}\right) ,
\label{probabilisticdynamical}
\ea
with $0< q_0, q_1 < 1$. In this example, $A_2 \prec A_3$ with probability $q_0$ if $x_1=0$ and with probability $q_1$ if $x_1=1$. Once again, this correlation is of the form~\eqref{def:causal_correlation} and can be established in a well-defined causal order. However, for $q_0\not=q_1$, it is not a probabilistic mixture of fixed-order causal correlations.\footnote{
To see this, assume that we can write $P=q P_1 + (1-q)P_2$, $q\in(0,1)$, where $P_1$ and $P_2$ are causal correlations with the fixed orders $A_1\prec A_2\prec A_3$ and  $A_1 \prec A_3\prec A_2$, respectively.
(Since $A_1$ has no output, we can assume they act first; see Sec.~\ref{subsec:trivialinout}.)
Note that $A_2$'s marginal distribution satisfies $P(a_2=0|000)=1$, and thus $P_1(a_2=0|000)=1$ also.
But the causal order of $P_1$ requires $P_1(a_2|000)=P_1(a_2|001)$, so that $q_0=P(a_2=0|001)=q + (1-q)P_2(a_2=0|001)\ge q$.
Similarly, since $P(a_3=1|000)=0$ we have $P_2(a_3=1|000)=P_2(a_3=1|010)=0$, so that $q_0=P(a_3=1|010)=qP_1(a_3=1|010)\le q$.
Together, this implies $q=q_0$.
Analogous reasoning for when $x_1=1$ implies that $q=q_1$ and thus we must have $q_0=q_1$ if $P$ is a mixture of fixed-order causal correlations.
}

\section{Characterization of causal correlations as a convex polytope}

As noted earlier, any convex combination of causal correlations is causal, meaning that causal correlations (for a given scenario) form a convex set. 
It was already argued in Refs.~\cite{oreshkov15,branciard16}, more precisely, that this set is a convex polytope, the so-called `causal polytope'.
Here we will prove this more explicitly by showing how any causal correlation can be written as a convex combination of deterministic causal correlations. 
(This was already proved for the bipartite case in Ref.~\cite{branciard16}.)
The polytope structure then follows from the fact that, for any given scenario, the number of such deterministic causal correlations is finite.
The facets of the causal polytope can be expressed as linear inequalities that are satisfied by all causal correlations: when nontrivial, these correspond to (tight) `causal inequalities'~\cite{oreshkov12,branciard16}.

\subsection{Decomposing causal correlations into deterministic ones}

Let us first introduce some more notation. A correlation is deterministic if the list of outputs, $\vec a$, is a deterministic function $\vec \alpha$ of the list of inputs, $\vec x$: $\vec a = \vec \alpha(\vec x)$. We shall then denote the corresponding probability distribution by $P_{\vec \alpha}^\textup{det}$, such that
\ba
P_{\vec \alpha}^\textup{det}(\vec a|\vec x) = \delta_{\vec a, \vec \alpha(\vec x)} \,.
\ea

We will now prove the following theorem:
\begin{thm}
\label{main}
Any $N$-partite causal correlation can be written as a convex combination
\ba
P(\vec a|\vec x) = \sum_{\vec \alpha} \, q_{\vec \alpha} \ P_{\vec \alpha}^\textup{det}(\vec a|\vec x) \label{eq:causal_deterministic_combination}
\ea
with $q_{\vec \alpha} \ge 0$, $\sum_{\vec \alpha} \, q_{\vec \alpha} = 1$,
where the sum is over all functions $\vec \alpha: \vec x \mapsto\vec a$ that define a deterministic causal correlation $P_{\vec \alpha}^\textup{det}(\vec a|\vec x)$.
\end{thm}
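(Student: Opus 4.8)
The plan is to prove Theorem~\ref{main} by induction on the number of parties $N$, mirroring the inductive structure of Definition~\ref{def:causal_correlations}. The base case $N=1$ is immediate: any single-party distribution $P(a_1|x_1)$ is a convex combination of deterministic response functions $\alpha_1 : x_1 \mapsto a_1$, and every such $\alpha_1$ trivially defines a (deterministic, hence causal) single-party correlation. So the work is entirely in the inductive step.

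For the inductive step, assume the claim holds for all correlations on fewer than $N$ parties, and let $P(\vec a|\vec x)$ be $N$-partite causal. By Definition~\ref{def:causal_correlations} we may write $P(\vec a|\vec x) = \sum_{k} q_k\, P_k(a_k|x_k)\, P_{k,x_k,a_k}(\vec a_{\backslash k}|\vec x_{\backslash k})$, where each $P_k(a_k|x_k)$ is a single-party distribution and each $P_{k,x_k,a_k}$ is a causal $(N{-}1)$-partite correlation. First I would decompose each single-party factor $P_k(a_k|x_k) = \sum_{\alpha_k} r^{(k)}_{\alpha_k} \delta_{a_k,\alpha_k(x_k)}$ into deterministic response functions, and apply the induction hypothesis to each $P_{k,x_k,a_k}$ to write it as $\sum_{\vec\beta} s^{(k,x_k,a_k)}_{\vec\beta}\, P^{\textup{det}}_{\vec\beta}(\vec a_{\backslash k}|\vec x_{\backslash k})$, where each $\vec\beta$ (on the parties in $\mathcal{N}\backslash k$) defines a deterministic causal $(N{-}1)$-partite correlation. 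Substituting both expansions into the decomposition, $P(\vec a|\vec x)$ becomes a convex combination (with weights $q_k\, r^{(k)}_{\alpha_k}\, s^{(k,x_k,a_k)}_{\vec\beta}$, suitably indexed) of terms of the form $\delta_{a_k,\alpha_k(x_k)}\, P^{\textup{det}}_{\vec\beta}(\vec a_{\backslash k}|\vec x_{\backslash k})$ — except that here, crucially, $\vec\beta$ is allowed to depend on the value $x_k,a_k$ through the choice of which $(N{-}1)$-partite expansion was used.

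**The main obstacle is this last dependence**: a single product $\delta_{a_k,\alpha_k(x_k)}\, P^{\textup{det}}_{\vec\beta}$ is not yet of the global deterministic form $\delta_{\vec a,\vec\alpha(\vec x)}$, because the $(N{-}1)$-partite deterministic piece we picked depends on $(x_k,a_k)$, not on a fixed choice. I would resolve this by ``reorganizing'' the sum: for each fixed $k$ and each fixed response function $\alpha_k$, the value $a_k = \alpha_k(x_k)$ is determined by $x_k$, so one can group the remaining freedom into a single choice of a \emph{family} $\{\vec\beta_{x_k}\}_{x_k}$ of deterministic causal $(N{-}1)$-partite functions indexed by $x_k$ — equivalently, a function $\vec\gamma$ sending $\vec x_{\backslash k}$ (together with the value of $x_k$) to $\vec a_{\backslash k}$. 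Expanding the product of the sums over $(x_k,a_k)$-indexed expansions into a sum over such families (a finite distributive expansion), each resulting elementary term is exactly $\delta_{a_k,\alpha_k(x_k)}\,\delta_{\vec a_{\backslash k},\,\vec\gamma(x_k,\vec x_{\backslash k})}= \delta_{\vec a,\vec\alpha(\vec x)}$ for the global function $\vec\alpha$ defined by $\alpha_k$ on coordinate $k$ and $\vec\gamma$ on the others. It then remains to check that this $\vec\alpha$ defines a \emph{causal} deterministic correlation: this follows because plugging $\alpha_k, \vec\gamma$ back into Definition~\ref{def:causal_correlations} gives a one-term decomposition (choose $q_k=1$) with $A_k$ first and a deterministic causal $(N{-}1)$-partite continuation, which is causal by construction. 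Summing the nonnegative weights over all these elementary terms, and using that they sum to $1$, yields the desired convex combination~\eqref{eq:causal_deterministic_combination}. Finiteness of the number of functions $\vec\alpha$ (for a fixed scenario) then gives the polytope structure as a corollary, though that is not part of the stated theorem.

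**A cleaner bookkeeping alternative**, which I would actually prefer to present, is to carry the induction hypothesis in the slightly stronger form that applies to the equivalent characterization~\eqref{def:causal_correlation_v2} or to ``input-dependent'' causal correlations $P_{\vec x_{\mathcal{M}}}(\vec a_{\mathcal{K}}|\vec x_{\mathcal{K}})$: namely, that such a family, jointly over all values of the conditioning inputs $\vec x_{\mathcal{M}}$, decomposes as a convex combination of deterministic causal correlations whose response functions are allowed to depend on $\vec x_{\mathcal{M}}$. Phrased this way, the dependence of $\vec\beta$ on $(x_k,a_k)$ is absorbed automatically into the inductive statement, and the reorganization above becomes a one-line invocation rather than an explicit distributive expansion. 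Either way, the essential content is: (i) induction on $N$, (ii) deterministic decomposition of the single-party head, (iii) the inductive decomposition of the $(N{-}1)$-partite tail \emph{with input dependence allowed}, and (iv) the observation that the resulting elementary strategies are precisely the deterministic causal ones.
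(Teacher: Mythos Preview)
Your proposal is correct and follows essentially the same approach as the paper's proof: induction on $N$, decomposition via Definition~\ref{def:causal_correlations}, deterministic expansion of both the single-party head and (by induction) the $(N{-}1)$-partite tail, and the key step of removing the $(x_k,a_k)$-dependence of the tail weights by a distributive expansion into tuples of deterministic $(N{-}1)$-partite strategies. The paper isolates precisely this last step as a separate lemma (Lemma~\ref{lem1}), whose content is exactly your ``finite distributive expansion'' into families $\{\vec\beta_{x_k}\}_{x_k}$; your version is in fact slightly more economical since you first fix $\alpha_k$ and index the families by $x_k$ alone, whereas the paper indexes over all $(x_k,a_k)$ pairs, but the two are equivalent.
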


\medskip

The proof is by induction:

\begin{itemize}

\item For $N=1$, it is a well-known fact that any correlation can be written as a convex combination of deterministic ones (see, e.g., Ref.~\cite{fine82}), and any single-party correlation is causal.

\item For any given $N \ge 2$ we shall prove the following implication: if it is true that all $(N{-}1)$-partite causal correlations can be written as convex combinations of deterministic ones (the induction hypothesis), then the same is true for $N$-partite causal correlations.

\end{itemize}

Consider an $N$-partite causal correlation $P(\vec a|\vec x)$, decomposed in the form~\eqref{def:causal_correlation}, with the correlations $P_{k,x_k,a_k}(\vec a_{\backslash k}|\vec x_{\backslash k})$ being $(N{-}1)$-partite causal correlations (for all $k,x_k,a_k$).
By the induction hypothesis, the latter can be decomposed as in Eq.~\eqref{eq:causal_deterministic_combination}:
\begin{align}
\hspace{-3mm}P_{k,x_k,a_k}(\vec a_{\backslash k}|\vec x_{\backslash k}) = \sum_{\vec \alpha_{\backslash k}} q_{\vec \alpha_{\backslash k}}(k,x_k,a_k)  P_{\vec \alpha_{\backslash k}}^\textup{det}(\vec a_{\backslash k}|\vec x_{\backslash k}),\! \label{eq:causal_deterministic_combination_Nminus1}
\end{align}
where the weights $q_{\vec \alpha_{\backslash k}}(k,x_k,a_k)$ depend in general on $k,x_k,a_k$, and the sum is over all functions $\vec \alpha_{\backslash k}: \vec x_{\backslash k} \mapsto \vec a_{\backslash k}$ that define a deterministic causal correlation $P_{\vec \alpha_{\backslash k}}^\textup{det}(\vec a_{\backslash k}|\vec x_{\backslash k})$. This decomposition does not yet prove the theorem, because we need to express $P(\vec a|\vec x)$ as a convex combination with weights that do not depend on the inputs and outputs. However, we can remove this dependency by appropriately rearranging the sum \eqref{eq:causal_deterministic_combination_Nminus1}. To this end, we shall first prove the following lemma:

\begin{lem} \label{lem1}
Consider a set of $M$ points $Q_m$ ($m = 1, \ldots, M$) belonging to some linear space, and $Z$ different points $P(z)$ ($z = 1, \ldots, Z$) in their convex hull, written as convex combinations of the extremal points $Q_m$ in the following way:
\ba
P(z) = \sum_{m=1}^M \, q_m(z) \ Q_m \,,
\ea
with weights $q_m(z)$ that depend on $z$ (such that, for each $z$, all $q_m(z) \ge 0$ and $\sum_{m=1}^M q_m(z) = 1$).

Then, each point $P(z)$ can also be written as 
\ba
P(z) = \sum_{m_1= 1}^M \cdots \sum_{m_Z= 1}^M \tilde{q}_{m_1,\dots, m_Z} \ Q_{m_z} \,,
\ea
where it is now the extremal points $Q_{m_z}$ that depend on $z$, while the new weights $\tilde{q}_{m_1,\dots, m_Z} \ge 0$, $\sum_{m_1,\dots, m_Z} \tilde{q}_{m_1,\dots, m_Z} = 1$ are fixed.
\end{lem}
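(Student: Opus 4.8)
The plan is to use the ``tensor product'' or ``independent randomness'' idea: we are given $Z$ different ways of writing points in a convex hull, each using its own input-dependent randomness, and we want a single common randomness. The natural move is to treat the $Z$ choices as independent and take a product distribution. Concretely, I would define
\be
\tilde{q}_{m_1,\dots,m_Z} := \prod_{z=1}^Z q_{m_z}(z) \,.
\ee
Each factor is nonnegative, so $\tilde{q}_{m_1,\dots,m_Z}\ge 0$, and since $\sum_{m} q_m(z)=1$ for each $z$, the product distribution is normalized: $\sum_{m_1,\dots,m_Z}\tilde{q}_{m_1,\dots,m_Z} = \prod_{z=1}^Z\big(\sum_{m_z} q_{m_z}(z)\big) = 1$. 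So the claimed weights exist; what remains is to check they reproduce each $P(z)$.

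The key computation is that, for a fixed target index $z$, summing $\tilde{q}_{m_1,\dots,m_Z} Q_{m_z}$ over all the indices collapses all factors except the $z$-th. Explicitly, I would write
\ba
\sum_{m_1,\dots,m_Z}\tilde{q}_{m_1,\dots,m_Z}\,Q_{m_z} &=& \sum_{m_z} \Bigg(\prod_{z'\neq z}\sum_{m_{z'}} q_{m_{z'}}(z')\Bigg)\, q_{m_z}(z)\, Q_{m_z} \nonumber \\
&=& \sum_{m_z} q_{m_z}(z)\, Q_{m_z} \ = \ P(z) \,,
\ea
using that each bracketed sum over $m_{z'}$ equals $1$ (by normalization of $q_\cdot(z')$) and then the original decomposition $P(z)=\sum_m q_m(z)Q_m$. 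This works in any linear space since we only ever form finite convex/linear combinations of the fixed points $Q_m$.

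There is essentially no serious obstacle here: the statement is a routine ``product of probability measures'' argument, and the only thing to be careful about is bookkeeping — making sure the index $m_z$ singled out in the sum is the same one appearing in $Q_{m_z}$, and that the remaining $Z-1$ sums each independently normalize to $1$. One mild point worth a sentence in the write-up is that the $Q_m$ need not be genuinely extremal for the argument to go through (the lemma says ``extremal points'' but only uses that the $P(z)$ lie in their convex hull with explicit weights); this is harmless. In the application to Theorem~\ref{main}, the role of $z$ will be played by the triples $(k,x_k,a_k)$ and the role of the $Q_m$ by the deterministic $(N{-}1)$-partite causal correlations $P^\textup{det}_{\vec\alpha_{\backslash k}}$, so the lemma will let one pull the input/output dependence of the weights in~\eqref{eq:causal_deterministic_combination_Nminus1} out into a fixed joint distribution over tuples of deterministic functions, at which point regrouping into $N$-partite deterministic causal strategies finishes the induction.
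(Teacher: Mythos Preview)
Your proposal is correct and follows essentially the same argument as the paper: both define the product weights $\tilde{q}_{m_1,\dots,m_Z} := \prod_{z=1}^Z q_{m_z}(z)$, observe that they form a probability distribution, and then marginalize over all indices except $m_z$ to recover $\sum_{m_z} q_{m_z}(z)\,Q_{m_z} = P(z)$. Your additional remarks about the unnecessary ``extremal'' hypothesis and the downstream application to Theorem~\ref{main} are accurate and match how the paper uses the lemma.
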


\begin{proof}
The new weights are defined as 
\ba
\tilde{q}_{m_1,\dots, m_Z} := \prod_{z=1}^Z q_{m_{z}}(z) \,.
\ea 
Then for a given $z$,
\ba
\sum_{\tiny \begin{array}{r}
m_1, \ldots, m_{z-1}, \\
m_{z+1}, \ldots, m_Z
\end{array}} \!\!\!\tilde{q}_{m_1,\dots, m_Z} \ = \ q_{m_z}(z) \,,
\ea
and
\ba
\sum_{m_1, \ldots, m_Z} \!\!\! \tilde{q}_{m_1,\dots, m_Z} \ Q_{m_z} \ &=& \ \sum_{m_z} \sum_{\tiny \begin{array}{r}
m_1, \ldots, m_{z-1}, \\
m_{z+1}, \ldots, m_Z
\end{array}} \!\!\! \tilde{q}_{m_1,\dots, m_Z} \ Q_{m_z} \nonumber \\
&=& \ \sum_{m_z} q_{m_z}(z) \ Q_{m_z} = P(z) \,, \quad
\ea
as required.
\end{proof}

Returning to the proof of Theorem~\ref{main}, we rename the party-input-output variables as $(k,x_k,a_k)\equiv z_k=1,\dots,Z_k$. 
We can now apply Lemma~\ref{lem1} to Eq.~\eqref{eq:causal_deterministic_combination_Nminus1} and write
\begin{multline}
P_{k,x_k,a_k}(\vec a_{\backslash k}|\vec x_{\backslash k})  \\
=
\sum_{\vec \alpha_{\backslash k}^{1},\dots,\vec \alpha_{\backslash k}^{Z_k}} \,
\tilde{q}_{{\vec \alpha_{\backslash k}^{1},\dots,\vec \alpha_{\backslash k}^{Z_k}}} \ 
P_{\vec \alpha_{\backslash k}^{z_k}}^\textup{det}(\vec a_{\backslash k}|\vec x_{\backslash k}) \,, \label{eq:causal_deterministic_combination_Nminus1_v2}
\end{multline}
where the correlations $P_{\vec \alpha_{\backslash k}^{z_k}}^\textup{det}(\vec a_{\backslash k}|\vec x_{\backslash k})$ are taken from the same set as the $P_{\vec \alpha_{\backslash k}}^\textup{det}(\vec a_{\backslash k}|\vec x_{\backslash k})$'s above, and hence are deterministic and causal.

The single-party probability distributions $P_k(a_k|x_k)$ in Eq.~\eqref{def:causal_correlation} can also be decomposed as a combination of deterministic correlations,
\ba
P_k(a_k|x_k) = \sum_{\alpha_k} \, q'_{\alpha_k} \ P_{\alpha_k}^\textup{det}(a_k|x_k) \,. \label{eq:causal_deterministic_combination_N_k}
\ea

Using Eqs.~\eqref{eq:causal_deterministic_combination_Nminus1_v2} and~\eqref{eq:causal_deterministic_combination_N_k}, we can now expand correlations $P(\vec a|\vec x)$ of the form of~\eqref{def:causal_correlation} as
\begin{multline}
P(\vec a|\vec x) \, = \, \sum_{k \in {\cal N}} \ q_k \ \sum_{\alpha_k} \, q'_{\alpha_k} \ P_{\alpha_k}^\textup{det}(a_k|x_k)  \\ 
\qquad \times \hspace{-8pt}
 \sum_{\vec \alpha_{\backslash k}^{1},\dots,\vec \alpha_{\backslash k}^{Z_k}} \, 
\tilde{q}_{{\vec \alpha_{\backslash k}^{1},\dots,\vec \alpha_{\backslash k}^{Z_k}}} \ 
P_{\vec \alpha_{\backslash k}^{z_k}}^\textup{det}(\vec a_{\backslash k}|\vec x_{\backslash k}) \\
\hspace{-5pt} = \hspace{-15pt}
 \sum_{\tiny \begin{array}{c}
	k,\alpha_k, \\ 
	\vec \alpha_{\backslash k}^{1},\dots,\vec \alpha_{\backslash k}^{Z_k}
\end{array}} \hspace{-15pt}
q_k \, q'_{\alpha_k} \, \tilde{q}_{{\vec \alpha_{\backslash k}^{1},\dots,\vec \alpha_{\backslash k}^{Z_k}}}
 \, P_{\alpha_k}^\textup{det}(a_k|x_k) \, P_{\vec \alpha_{\backslash k}^{z_k}}^\textup{det}(\vec a_{\backslash k}|\vec x_{\backslash k}) \,. \!\!\!
\end{multline}
This is indeed a convex combination of deterministic causal correlations, with weights independent of inputs and outputs, which thus completes the proof.
\qed

\subsection{Describing deterministic causal strategies}
\label{subsec:strategies}

As mentioned above, a deterministic `strategy' (or correlation) can be characterized by a deterministic function $\vec \alpha$ of the list of inputs $\vec x$, which determines the list of outputs $\vec a = \vec \alpha(\vec x)$.

Of course, not any such function will make the correlation $P_{\vec \alpha}^\textup{det}(\vec a|\vec x) = \delta_{\vec a, \vec \alpha(\vec x)}$ causal. In order to be causal, $P_{\vec \alpha}^\textup{det}$ must indeed have a decomposition of the form~\eqref{def:causal_correlation}.
Since $P_{\vec \alpha}^\textup{det}(\vec a|\vec x)$ can only take values 0 or 1, this implies in particular that the weights $q_k$ are also 0 or 1 and hence, there is only one term in the sum.

That is, the causal deterministic strategy $\vec \alpha$ can be understood as follows: it determines a party $A_{k_1}$ that acts first. The output $a_{k_1}$ of that party is then a deterministic function of its input $x_{k_1}$ (which is also specified by $\vec \alpha$). For each input $x_{k_1}$ of that party (and the corresponding deterministic output $a_{k_1}$), the remaining parties must then also share a deterministic and causal correlation---which in turn must be compatible with one specific party acting first. Hence, the input $x_{k_1}$ of the first party also determines the party $A_{k_2(x_{k_1})}$ that acts second (recall that causal correlations allow for dynamical causal orders, see the example in Sec.~\ref{subsubsec:examples}); the response function of that party is then a deterministic function of the input of the first party and its own input. Continuing in this fashion, the party that acts third then depends on the inputs of $A_{k_1}$ and $A_{k_2(x_{k_1})}$, and its output is a deterministic function of the inputs of those two parties and its own input; etc.

Thus, each given set of inputs $\vec x$ can be viewed as being processed in a particular causal order\footnote{If some parties are causally independent then this order may not be unique.} $A_{k_1} \prec A_{k_2^{\vec x}} \prec \cdots \prec A_{k_N^{\vec x}}$ (with $k_2^{\vec x} = k_2(x_{k_1})$, etc.), so that the correlation $P_{\vec \alpha}^\textup{det}(\vec a|\vec x)$ can be written as
\ba
P_{\vec \alpha}^\textup{det}(\vec a|\vec x) &=& P(a_{k_1}|x_{k_1}) \, P(a_{k_2^{\vec x}}|x_{k_1},x_{k_2^{\vec x}})\notag \\ 
 &&  \times \, \cdots \, \times\, P(a_{k_{N-1}^{\vec x}}| \vec x_{\backslash k_N^{\vec x}}) \, P(a_{k_N^{\vec x}}|\vec x) \, . \quad
\ea
It follows in particular that for each given $\vec x$, there exist $N$ nested subsets ${\cal  K}_1 \subset {\cal  K}_2^{\vec x} \subset \cdots \subset {\cal  K}_N^{\vec x}$ with ${\cal  K}_K^{\vec x} = \{ k_1, k_2^{\vec x}, \ldots, k_K^{\vec x} \}$, $|{\cal  K}_K^{\vec x}| = K$, such that for all $K$, the marginal distribution
\ba
P_{\vec \alpha}^\textup{det}(\vec a_{{\cal  K}_K^{\vec x}}|\vec x) = P_{\vec \alpha}^\textup{det}(\vec a_{{\cal  K}_K^{\vec x}}|\vec x_{{\cal  K}_K^{\vec x}})
\ea
does not depend on the inputs of the last $N-K$ parties in the causal order realized on input $\vec x$.

\subsection{Causal correlations in scenarios with trivial inputs or outputs}\label{subsec:trivialinout}

Consider a scenario in which one party $A_k$ has a fixed output for all its possible inputs (the output being fixed, we could just ignore it and equivalently say that $A_k$ has no output).\footnote{This scenario also arises when one averages over $A_k$'s outputs.}
It is well known that, for Bell-type local correlations, this scenario is equivalent to the similar one in which $A_k$ is simply ignored.
More generally, if a single input $x_k$ has a fixed (or, equivalently, no) output, then the local polytope is simply equivalent to that obtained by discarding the input $x_k$ completely~\cite{pironio05}.
In contrast, for the case of causal inequalities it has already been noted that one can obtain interesting correlations with `nontrivial inputs with fixed outputs'~\cite{branciard16}.
What can we therefore say more generally about the causal polytope when $A_k$ has a fixed output for \emph{all} its inputs?

In such a scenario we can write the $N$-partite correlation as
\begin{equation}
	P(\vec{a}|\vec{x}) = P(\vec{a}_{\backslash k}|\vec{x}_{\backslash k}, x_k) = P_{x_k}(\vec{a}_{\backslash k}|\vec{x}_{\backslash k})\,.
\end{equation}
If $P_{x_k}(\vec a_{\backslash k}|\vec x_{\backslash k})$ is causal for all $x_k$ then $P(\vec a|\vec x)$ is trivially of the form~\eqref{def:causal_correlation}, and therefore causal.
Conversely, if $P(\vec a|\vec x)$ is causal then, by the remark discussed in Subsection~\ref{subsubsec:ignore_parties}, $P_{x_k}(\vec a_{\backslash k}|\vec x_{\backslash k})$ is also causal for each $x_k$.

Thus, the $N$-partite correlation is causal if and only if all of the conditional $(N{-}1)$-partite correlations obtained for each possible input $x_k$ of $A_k$ are causal.
In order to test whether $P(\vec{a}|\vec{x})$ is causal it therefore suffices to test whether the $(N{-}1)$-partite correlations are causal, and one can always assume that $A_k$ is located before all the other parties.\footnote{Hence, in an $N$-partite scenario where one party has a trivial output, a noncausal correlation can only be obtained if some reduced $(N{-}1)$-partite correlation is already noncausal. Note that in contrast, in the framework of process matrices~\cite{oreshkov12} (see Section~\ref{sec:process_matrices}), the property of causal nonseparability of a process---which is the `device-dependent' analog of noncausality for correlations~\cite{araujo15,branciard16}---can be witnessed in a scenario where some parties only have trivial outputs (e.g., where they simply implement unitary operations), while all reduced processes involving fewer parties are causally separable~\cite{araujo15,branciard16b}.}

\medskip

Another important scenario to understand is that in which a party $A_k$ has a single fixed input (or equivalently, as before, no input).
In this case we have (from the definition of conditional probabilities)
\begin{align}
\hspace{-1mm} P(\vec a|\vec x) = P(\vec a_{\backslash k}, a_k|\vec x_{\backslash k}) = P(\vec a_{\backslash k}|\vec x_{\backslash k}) P_{\vec x_{\backslash k},\vec a_{\backslash k}}\!(a_k),
\end{align}
with $P_{\vec x_{\backslash k},\vec a_{\backslash k}}(a_k) := P(a_k|\vec x_{\backslash k},\vec a_{\backslash k})$.
If $P(\vec a_{\backslash k}|\vec x_{\backslash k})$ is causal then $P(\vec a|\vec x)$ is clearly of the form~\eqref{def:correlation_party_last}, and thus causal.
Conversely, referring again to the remark in Sec.~\ref{subsubsec:ignore_parties}, if $P(\vec a|\vec x)$ is causal then so is $P(\vec a_{\backslash k}|\vec x_{\backslash k})$.

Thus, as is the case for locality, the causality of the $N$-partite correlation is equivalent to the causality of the $(N{-}1)$-partite correlation obtained by discarding the party $A_k$ with a fixed input.
Causally, one may consider that $A_k$ always acts after the other $N-1$ parties.
Note that this is also true, more generally, whenever a party $A_k$ cannot signal to any other party or set of parties---as is indeed the case when they have a fixed (or no) input.

\section{Simplest tripartite inequalities}\label{sec:tripartiteinequalities}

With the basic properties of the multipartite causal polytope laid out, we wish to study in detail the simplest scenario with more than two parties (i.e., which is not reducible to the bipartite scenario that was characterized in Ref.~\cite{branciard16}).
In contrast to the case for Bell inequalities, where the simplest such case is the `$(3,2,2)$' scenario with 3 parties all having binary inputs and outputs, the discussion in the previous section suggests that a simpler tripartite scenario exists for causal correlations.
This is the scenario where each party $A_k$ has a binary input $x_k$, a single constant output for one of the inputs, and a binary output for the other.
Specifically, we consider that for each $k$, the input $x_k=0$ has the constant output $a_k=0$, while the input $x_k=1$ has two possible outputs, $a_k=0$ or~$1$.

As is standard, let us denote the three parties $A, B, C$ (i.e., $A_1=A, A_2=B, A_3=C$), their inputs $x,y,z$ (instead of $x_1,x_2,x_3$) and their outputs $a,b,c$ (instead of $a_1,a_2,a_3$).
We will denote below by $P_{ABC}$ the complete tripartite probability distribution (i.e., $P_{ABC}(a,b,c|x,y,z) := P(a,b,c|x,y,z)$), and by $P_{AB}$, $P_A$, etc. the marginal distributions for the parties indicated by the subscript (e.g., $P_{AB}(a,b|x,y,z) := \sum_c P_{ABC}(a,b,c|x,y,z)$, etc.). Note that every marginal distribution retains a dependency on all three inputs.

\subsection{Characterizing the causal polytope}

The vertices of the causal polytope for this scenario can be found by enumerating all the deterministic probability distributions $P_{ABC}(a,b,c|x,y,z)$ compatible with any of the 12 possible definite causal orders (for each of the 3 parties acting first, there are 4 possible causal orders for the remaining 2 parties: two fixed orders, and two dynamical ones, where the order depends on the input of the first party).  One finds that there are 680 such strategies (and thus vertices), of which 488 are compatible with a fixed causal order, while the remaining 192 require a dynamical order to be realized.

The causal polytope is $19$-dimensional, since this is the minimum number of parameters needed to completely specify any probability $P_{ABC}(a,b,c|x,y,z)$: for each set of inputs $x,y,z$, if $n$ of them are non-zero then one needs $2^n-1$ values to specify the probabilities completely for these inputs (normalization determines the remaining value)---so that the dimension of the problem is indeed $\sum_{n=0}^{3}\binom{3}{n}(2^n-1)=19$.

In order to determine the facets of this polytope, which correspond directly to tight causal inequalities, a parametrization of the polytope must be fixed and the convex hull problem solved~\cite{branciard16}.
Several such parametrizations of $P_{ABC}(a,b,c|x,y,z)$ are possible but we found that, because of the size of the polytope, the ability to solve the convex hull problem depended critically on the chosen parametrization.
Using the software \textsc{cdd}~\cite{cdd} we were able to compute the facets of the polytope from its description in terms of its vertices with the following parametrization:
\begin{align}
\vec P = \big( & P_A(1|100), P_B(1|010), P_C(1|001), \notag \\
& P_{AB}(10|110), P_{AB}(01|110), P_{AB}(11|110), \notag \\
& P_{BC}(10|011), P_{BC}(01|011), P_{BC}(11|011), \notag \\
& P_{AC}(01|101), P_{AC}(10|101), P_{AC}(11|101), \notag \\
& P_{ABC}(100|111), P_{ABC}(010|111), \notag \\
& \quad P_{ABC}(001|111), P_{ABC}(110|111), \notag \\
& \qquad P_{ABC}(011|111), P_{ABC}(101|111), \notag \\
& \qquad \qquad \qquad \qquad \qquad P_{ABC}(111|111) \big) \,.
\end{align}

In total, the polytope was found to have $13\,074$ facets, each corresponding to a causal inequality.
However, inequalities that can be obtained from one another, either by relabeling outputs or permuting parties, can be considered to be equivalent.
Once such equivalences are taken into account one finds that there are 305 equivalence classes, or `families', of inequalities.
A complete list of these families can be found in the Supplemental Material~\cite{SM}, but in what follows we will focus on some specific interesting examples.

\subsection{Three simple inequalities}\label{subsec:threeineq}

As is standard for polytopes of correlations, several facets correspond to trivial inequalities of the form $P(a,b,c|x,y,z) \ge 0$.
Specifically, there are three inequivalent such families, corresponding to 1, 2 or 3 inputs being 1.
One also recovers conditional versions of the nontrivial bipartite `lazy guess-your-neighbor's-input' (LGYNI) inequalities $P(x ( a \oplus y ) = y (b \oplus x) = 0 |z) \le 3/4$, where $\oplus$ denotes addition modulo 2 (and where this notation implicitly assumes that the inputs $x$ and $y$ are uniformly distributed)~\cite{branciard16}.
These can equivalently be written as
\begin{equation}\label{ineq:lgyni}
P_A(1|10z) + P_B(1|01z) - P_{AB}(11|11z) \ \ge \ 0 \,. 
\end{equation}
There are two inequivalent families of such inequalities, for the two cases of $z = 0$ or $1$.

Amongst the remaining families of inequalities there are 5 which are completely symmetric under exchange of parties, and which are good candidates for simple inequalities that nontrivially involve all three parties.
The following two are of particular interest due to their simple form, and the fact that they can be seen as natural generalizations of the LGYNI inequality~\eqref{ineq:lgyni}:
\begin{align}
I_1 =&\, P_{AB}(11|110) + P_{BC}(11|011) \notag \\ &+ P_{AC}(11|101)
- P_{ABC}(111|111) \ \ge \ 0 \,, \label{ineq:1}
\end{align}
and
\begin{align}
I_2 =&\, P_A(1|100) + P_B(1|010) \notag \\
&+ P_C(1|001) - P_{ABC}(111|111)  \ \ge \ 0 \,. \label{ineq:2}
\end{align}

As is the case for the LYGNI inequality~\eqref{ineq:lgyni}, these two inequalities can be expressed as `causal games'~\cite{oreshkov12,branciard16}.
They can indeed be written as 
\begin{align}
& P\big(xy(ab \oplus z) = yz(bc \oplus x) \notag \\
&\hspace{18mm} =xz(ac \oplus y)=0\big) \ \le \ 7/8 \label{ineq:1_game}
\end{align}
and
\begin{align}
& P\big(x(y \oplus z \oplus 1)(a \oplus yz) =y(x \oplus z \oplus 1)(b \oplus xz) \notag \\
& \hspace{18mm} =z(x \oplus y \oplus 1)(c \oplus xy)=0\big) \ \le \ 7/8 \,, \label{ineq:2_game}
\end{align}
respectively, where it is implicitly assumed that all inputs occur with the same probability.
More precisely, the first inequality can be interpreted as a game in which the goal is to collaborate so that, whenever two parties both receive the input 1, the product of their outputs should match the input of the other party (in all other cases any output wins the game).
The second inequality can be interpreted as a similar game in which the goal is to ensure that, whenever a party receives the input 1 and the other two parties receive the same input, that party's output should match the other parties' inputs.
In both cases, the probability of success can be no greater than $7/8$ if the three parties follow a causal strategy.
It is simple to saturate this bound with a deterministic causal strategy: for example, if all parties always output $0$, both games are won in all cases except that where all inputs are $1$, giving indeed a success probability of $7/8$.

Another, simple inequality of interest that is symmetric only under a cyclic permutation of parties, and can also be seen as a generalization of the LGYNI input inequality, is the following:
\begin{align}
 I_3 =\,& 2 - P_{AB}(01|110) - P_{BC}(01|011) \notag \\
&\hspace{30mm} - P_{AC}(10|101) \ \ge \  0 \,. \label{ineq:3}
\end{align}
As for the previous two inequalities, this causal inequality can be interpreted as a causal game in the form (still implicitly assuming a uniform distribution of inputs for all parties):
\begin{align}
& P\big(xy(z \!\oplus\! 1)((a \!\oplus\! 1) b \!\oplus\! 1)=yz(x \!\oplus\! 1)((b \!\oplus\! 1)c \!\oplus\! 1) \notag \\
& \hspace{17mm} =xz(y \!\oplus\! 1)((c \!\oplus\! 1) a \!\oplus\! 1)=0\big) \ \le \ 7/8 \,, \label{ineq:3_game}
\end{align}
where the goal of the game is to ensure that whenever exactly two parties receive the input 1, 
each of them must guess the input of their left-hand neighbor (where $C$ is considered, in a circular manner, to be to the left of $A$). 
The bound $7/8$ on the probability of success can for instance be reached with the causal strategy, compatible with the order $A\prec B\prec C$, where the parties output $a=0$, $b=xy$ and $c=yz$: this strategy indeed wins the game in all cases except when the inputs are $(x,y,z) = (1,0,1)$.

\subsection{Generalizing tripartite causal inequalities}

For scenarios more complicated than the `simplest' tripartite one considered above, the convex hull problem---and thus the characterization of the causal polytope---very quickly becomes intractable.
For example, the polytope for the `complete binary' tripartite case where binary outputs are allowed for both inputs, has $138\,304$ vertices and is 56-dimensional.
Beyond this, it moreover becomes difficult to even enumerate the different vertices of the causal polytope.

Although we were hence unable to enumerate the causal inequalities for this complete binary tripartite scenario, by enumerating the vertices of the polytope we were able to verify that the three causal inequalities $I_1, I_2, I_3 \ge 0$ discussed above are in fact facets in this scenario as well.
To see this, one can enumerate all the deterministic strategies and thus vertices of the polytope, and use the fact that an inequality is a facet of the polytope if and only if \emph{a}) it is satisfied by every vertex of the polytope, and \emph{b}) there are $d$ affinely independent vertices saturating the inequality, where $d$ is the dimension of the polytope~\cite{pironio05}.

Another facet of this complete binary tripartite polytope has independently been found by Ara\'{u}jo and Feix~\cite{AraujoFeixIneq} (reproduced here with their permission) and can be written as
\begin{align}
	I_4=\,& 2 - P_{ABC}(000|000)  - P_{ABC}(011|110) \notag \\ 
	& \quad  - P_{ABC}(101|011) - P_{ABC}(110|101) \ \ge \ 0 \,. \label{ineq:4}
\end{align}
This inequality can be interpreted as a causal game that generalizes that of Eq.~\eqref{ineq:3_game}:
\begin{align}
P\big( (x\!\oplus\! y \!\oplus\! z \!\oplus\! 1)(&(b\!\oplus\! x\!\oplus\! 1)(c \!\oplus\! y\!\oplus\! 1) \notag \\
&  
\times(a \!\oplus\! z\!\oplus\! 1)\!\oplus\! 1)=0\big) \le \ 3/4\,,\label{ineq:4_game}
\end{align}
where the goal of the game is, whenever an even number of parties receive the input 1, for every party to guess the input of their left-hand neighbor.\footnote{The bound $3/4$ on the probability of success can for instance be reached with the causal strategy, compatible with the order $A\prec B\prec C$, where the parties output $a=0$, $b=x$ and $c=y$: this strategy indeed wins the game in all cases except when the inputs are $(x,y,z) = (1,0,1)$ or $(0,1,1)$.}
This game is equivalent to a form of the original multipartite, cyclic, guess your neighbor's input (GYNI) game~\cite{Almeida:2010aa} in which each party must always guess their left-hand neighbor's input, but a non-uniform distribution of inputs is considered (namely, only the four input combinations appearing in Eq.~\eqref{ineq:4}  are allowed). It is interesting to note that the inequality corresponding to the variant of the GYNI game with a uniform distribution of \emph{all} inputs is not a facet of the polytope.

Thus, this game is a form of the original multipartite, cyclic, guess-your-neighbor’s-input (GYNI) game introduced in Ref.~\cite{Almeida:2010aa}, where different input distributions where considered.
It is interesting to note that the inequality corresponding to the GYNI game with a uniform distribution over all inputs is not a facet of the polytope.

As we noted earlier, the inequalities discussed in the previous subsection can be seen as natural possible generalizations of the LGYNI bipartite inequalities from Ref.~\cite{branciard16}.
This suggests that similar generalizations to $N$-partite scenarios might provide tight causal inequalities for arbitrarily many parties.
In particular, the natural generalizations of the first two inequalities, Eqs.~\eqref{ineq:1} and~\eqref{ineq:2}, to $N$-parties would be:\footnote{To prove that Eq.~\eqref{ineq:1gen} indeed defines a valid causal inequality for all $N$, it is sufficient to prove that it holds for all deterministic causal correlations. From the remark at the end of Subsection~\ref{subsec:strategies} we know that, given a deterministic causal correlation $P$, the input $\vec x = (1,\ldots,1)$ fixes a particular causal order, and in particular a `last' party $A_k$, such that $P_{\mathcal{N} \backslash k}(1,\dots, 1|\vec x_{\backslash k}=(1,\dots, 1), x_k=0) = P_{\mathcal{N} \backslash k}(1,\dots, 1|\vec x_{\backslash k}=(1,\dots, 1), x_k=1)$. This implies that $P_{\mathcal{N} \backslash k}(1,\dots, 1|\vec x_{\backslash k}=(1,\dots, 1), x_k=0) - P_\mathcal{N}(1,\dots,1 \,|\, 1,\dots, 1) \ge 0$. $J_{1}(N)$ is then obtained by adding some more nonnegative terms $P_{\mathcal{N} \backslash k'}(\ldots)$, so that it necessarily remains nonnegative.
\\
Similarly, to prove Eq.~\eqref{ineq:2gen}, we note that a deterministic causal correlation fixes a `first' party $A_k$ such that $P_k(1| x_k=1, \vec x_{\backslash k}=(0,\dots, 0)) = P_k(1| x_k=1, \vec x_{\backslash k}=(1,\dots, 1))$ and thus $P_k(1| x_k=1, \vec x_{\backslash k}=(0,\dots, 0)) - P_\mathcal{N}(1,\dots,1 \,|\, 1,\dots, 1) \ge 0$. Again, $J_{2}(N)$ is then obtained by adding some more nonnegative terms, so that it remains nonnegative.}
\begin{align}
J_{1}(N) =& \sum_{k\in{\cal N}} P_{\mathcal{N} \backslash k}\big(1,\dots, 1\,|\, x_k=0,\, \vec x_{\backslash k}=(1,\dots, 1)\big)\notag \\[-1mm]
& \hspace{10mm} - P_\mathcal{N}\big(1,\dots,1 \,|\, 1,\dots, 1\big)  \ \ge \ 0 \,, \label{ineq:1gen}
\end{align}
and
\begin{align}
J_{2}(N) =& \sum_{k\in{\cal N}} P_k \big(1\,|\, x_k=1,\, \vec x_{\backslash k}=(0,\dots, 0)\big)\notag \\[-1mm]
& \hspace{10mm} - P_\mathcal{N}\big(1,\dots, 1 \,|\, 1,\dots, 1\big) \ \ge \ 0 \,, \label{ineq:2gen}
\end{align}
which are both obtained by generalizing the game interpretation of the corresponding tripartite inequalities, replacing the references to `two parties' with `$N-1$ parties' in the descriptions of these games.
One finds that, with a causal strategy, one can win these games with probability\footnote{This probability of success (or equivalently, the bounds $0$ in Eqs.~\eqref{ineq:1gen} and~\eqref{ineq:2gen}) can again be reached when all parties always output $0$: they then lose the game only when all inputs are~$1$.} $1 - 1/2^{N}$, which approaches 1 as $N$ becomes large.
By enumerating all the vertices of the `simplest' 4-partite causal polytope (where again, on input $x_i=0$ the output is fixed as $a_i=0$, and on input $x_i=1$ binary output is allowed), which has $3\,209\,712$ vertices and is 65-dimensional, we were able to verify that both the inequalities $J_1(4)\ge 0$ and $J_2(4)\ge 0$ are facets of the corresponding causal polytope. It remains an open question to prove their tightness for $N\ge 5$.

\section{Violating causal inequalities with process matrix correlations}
\label{sec:process_matrices}

The facet inequalities presented in Section~\ref{sec:tripartiteinequalities} give bounds on the correlations that can be obtained causally, but is it possible to violate them in a more general framework in which causality is only assumed to hold locally?
In this section we use the process matrix formalism introduced by Oreshkov, Costa and Brukner~\cite{oreshkov12} to study precisely this possibility, thus performing a similar analysis to that of Ref.~\cite{branciard16} in showing the violation of causal inequalities with bipartite process matrices.

\subsection{The process matrix framework}

The process matrix framework allows one to formalize scenarios in which quantum mechanics holds locally for each party, but in which no global causal order between the parties is assumed.
Instead, weaker consistency conditions are imposed to ensure that probabilities are well behaved and paradoxes thus avoided.
We will recall only the basics of this formalism needed for what follows, and we refer the reader to Refs.~\cite{oreshkov12,araujo15} for a more detailed presentation of the framework.

Each party $A_k$ receives an incoming physical system described in a Hilbert space $\Hi^{A_k^I}$ and produces an outgoing system in a Hilbert space $\Hi^{A_k^O}$.
The operation that $A_k$ performs (which in general depends on the classical input $x_k$) is modeled by a quantum instrument~\cite{Davies:1970aa}.
An instrument is a set of completely positive~(CP) trace non-increasing maps from $\LO(\Hi^{A_k^I})$ to $\LO(\Hi^{A_k^O})$, where $\LO(\Hi)$ denotes the space of linear operators over $\Hi$, such that the sum of the maps is trace preserving.
Each CP map in an instrument is associated with a measurement outcome $a_k$.
Any such instrument can be conveniently represented as a set of operators $\{M_{a_k|x_k}^{A_k^IA_k^O}\}_{a_k}$ (for a fixed input $x_k$) on the space $\LO(\Hi^{A_k^I}\otimes\Hi^{A_k^O})$ using the Choi-Jamio\l{}kowski (CJ) isomorphism~\cite{Choi:1975aa,Jamiolkowski:1972aa}, satisfying the following criteria~\cite{oreshkov12}:
\begin{equation}\label{eqn:instrumentCond}
	\forall a_k ,\, M_{a_k|x_k}^{A_k^IA_k^O}\ge 0 \quad \text{and} \quad \tr_{A_k^O}\sum_{a_k}M_{a_k|x_k}^{A_k^I A_k^O}=\id^{A_k^I},
\end{equation}
where $\tr_{A_k^O}$ denotes the partial trace over $\Hi^{A_k^O}$ and $\id^{A_k^I}$ is the identity operator on $\Hi^{A_k^I}$.

By requiring the consistency and normalization of probabilities for all quantum instruments, it can be shown that the correlations obtainable in such a scenario can be calculated through the \emph{generalized Born rule}, which for the tripartite case can be written
\begin{align}
	& P(a,b,c|x,y,z)\notag \\
	& \qquad =\tr\left[ \left( M_{a|x}^{A^I A^O}\otimes M_{b|y}^{B^I B^O} \otimes M_{c|z}^{C^I C^O} \right)\cdot W \right],\label{eqn:genBornRule}
\end{align}
for some positive semidefinite Hermitian matrix $W\in\LO(\Hi^{A^I}\otimes\Hi^{A^O}\otimes\Hi^{B^I}\otimes\Hi^{B^O}\otimes\Hi^{C^I}\otimes\Hi^{C^O})$ called the \emph{process matrix}.
In order for such a $W$ to be a valid process matrix it must satisfy some further linear constraints.
We refer the reader to Refs.~\cite{araujo15,oreshkov15} for explicit presentation of these for tripartite process matrices as the conditions themselves are not particularly important for what follows.

Process matrices generalize the notions of quantum states and channels, and represent the interactions between the parties without enforcing any causal order between them.
It remains an open question whether all process matrices can be realized physically, but it is known that some processes which are incompatible with any definite causal order (e.g., the quantum switch mentioned earlier) can indeed be physically implemented~\cite{Chiribella:2013aa,araujo14,Procopio:2015aa}.
We refer to the correlations that can be realized by process matrices as \emph{process matrix correlations}.

\subsection{Violating the simplest tripartite inequalities}

In order to look for process matrices that can violate the tripartite causal inequalities we presented in Section~\ref{sec:tripartiteinequalities}, we generalize the `see-saw' approach successfully used in Ref.~\cite{branciard16} to find violations of bipartite causal inequalities.

For a causal inequality of the form $I\big(P(\vec{a}|\vec{x})\big)\ge 0$, to calculate the value of $I\big(P(\vec{a}|\vec{x})\big)$ obtained by a process matrix correlation one needs both the process matrix $W$ and the $N$ parties' instruments $\{M_{a_k|x_k}^{A_k^I A_k^O}\}$.
The algorithmic approach we used to find violations of such inequalities makes use of the fact that, given either the instruments for all $N$ parties or the instruments for $N-1$ parties and a valid process matrix, the problem of finding, respectively, the $W$ or the $N$th instrument that minimizes $I\big(P(\vec{a}|\vec{x})\big)$ can be expressed as a semidefinite programming (SDP) problem~\cite{branciard16} that can be solved efficiently.
The algorithm initially selects random instruments for all parties, then iteratively solves the SDP problem to find the optimal process matrix and instruments for each party in turn.
This iterative procedure continues until the algorithm converges to a value of $I\big(P(\vec{a}|\vec{x})\big)$.
Although this is only guaranteed to find a local, not global, minimum, by repeating the procedure many times with different initial instruments one can obtain a bound on the optimal violation (if any) of the causal inequality of interest.

Using this approach, we found processes matrices that can be used to violate all three causal inequalities~\eqref{ineq:1},~\eqref{ineq:2} and~\eqref{ineq:3} (i.e., $I_1, I_2, I_3 \ge 0$) presented in Subsection~\ref{subsec:threeineq} with two-dimensional incoming and outgoing systems for each party, i.e., with `qubits' (all $\Hi_k^{A^{I/O}} = \mathbb{C}^2$).

In order to violate the first inequality~\eqref{ineq:1}, one can take the process matrix $W_1\in\LO(\Hi^{A^I}\otimes\Hi^{A^O}\otimes\Hi^{B^I}\otimes\Hi^{B^O}\otimes\Hi^{C^I}\otimes\Hi^{C^O}) = \LO((\mathbb{C}^2)^{\otimes 6})$ to be
\ba
W_1 &=& \frac18 \Big[ \id^{\otimes 6} - \frac12 \big[ Z \id \id \id \id \id + \id \id Z \id \id \id + \id \id \id \id Z \id \nonumber \\
&& \hspace{18mm} - Z \id Z \id Z \id + (\id{-}Z) Z (\id{-}Z) Z Z \id \nonumber \\
&& \hspace{4mm} + Z \id (\id{-}Z) Z (\id{-}Z) Z + (\id{-}Z) Z Z \id (\id{-}Z) Z \big] \Big] , \nonumber \\ \label{eq:W_I1}
\ea
where $Z$ denotes the Pauli $Z$ matrix, $\id$ the $2 \times 2$ identity matrix, and where tensor products are implicit between all terms.
It can readily be verified that $W_1$ is indeed a valid process matrix~\cite{araujo15,oreshkov15}.
One can then take the instruments for all three parties to be:
\begin{gather}
\big\{ M_{0|0} = \frac12(\id + ZZ) = \proj{0} \otimes \proj{0} + \proj{1} \otimes \proj{1} \big\} \,, \notag \\
\big\{M_{0|1} = \proj{0} \otimes \proj{1} \,, \ M_{1|1} = \proj{1} \otimes \proj{0}\big\} \,, \label{eq:instruments1}
\end{gather}
where $\{\ket{0},\ket{1}\}$ is the computational basis (i.e., the eigenbasis of the $Z$ operator).
These instruments are easily seen to satisfy Eq.~\eqref{eqn:instrumentCond} and have simple interpretations: on input $0$ each party performs a computational basis measurement, prepares and sends the eigenstate corresponding to the measurement outcome, and outputs $0$ (as they must for this input); on input $1$, a computational basis measurement is again performed, the party outputs the result ($0$ or $1$) obtained from that measurement, and sends the basis state corresponding to the \emph{other} (unobtained) measurement outcome.
With this process matrix and these instruments, one finds, from the generalized Born rule~\eqref{eqn:genBornRule}, that 
\begin{equation}
	I_1=-1<0,
\end{equation}
which is the maximum algebraic violation of~\eqref{ineq:1}.
In other words, it wins the corresponding game (Eq.~\eqref{ineq:1_game}) perfectly (with probability $1$). 
Such maximal violation of a causal inequality was also found for a tripartite scenario in Ref.~\cite{baumeler13}, although this scenario differed in that all parties shared an additional classical input, and all outputs were binary.

It is interesting to note that both $W_1$ and the instruments $\{M_{a_k|x_k}\}$ are all diagonal in the computational basis. These can thus be interpreted as a classical process and classical operations, respectively \cite{Baumelerspace2016}.
If all parties perform classical operations it is known that they cannot violate any bipartite causal inequality~\cite{oreshkov12}.
For three-or-more parties, on the other hand, it has recently been shown that one can violate causal inequalities with purely classical operations and processes~\cite{baumeler14}, and our classical violation of Eq.~\eqref{ineq:1} is a novel illustration of this, in a simpler scenario than considered previously.

The third inequality~\eqref{ineq:3}, that we presented in Subsection~\ref{subsec:threeineq} can also be violated with exactly the same instruments as the first, using the process matrix
\begin{align}
W_3 =& \frac18 \Big[ \id^{\otimes 6} + \frac12 \big[ \id (\id{+}Z)Z \id Z Z - \id (\id{-}Z)Z Z Z \id \notag \\
& \qquad \qquad + Z Z \id (\id{+}Z) Z \id - Z \id \id (\id{-}Z)Z Z \notag \\
& \qquad \qquad + Z \id Z Z \id (\id{+}Z) - Z Z Z \id \id (\id{-}Z) \big] \Big], \quad \label{eq:W_I3}
\end{align}
which can again be readily verified to be valid.
As for the first inequality, one obtains the maximal algebraic violation $I_3=-1<0$ with a classical process matrix and classical instruments.

The second inequality~\eqref{ineq:2}, unlike the two discussed above, does not seem to be violated either maximally or by a classical process.
The best violation we found (with qubits) using the iterated optimization algorithm is obtained using the instruments:
\begin{gather}
\big\{ M'_{0|0} =\KetBra{\id}{\id} \big\} \,, \notag \\
\big\{ M_{0|1} = \proj{0} \otimes \proj{1} \,, \ M_{1|1} = \proj{1} \otimes \proj{0} \big\} \,, \label{eq:instruments2}
\end{gather}
where $\Ket{\id}=\ket{00}+\ket{11}$ and $\KetBra{\id}{\id}$ is the CJ representation of the identity channel.
The operation performed on input $1$ is thus the same as that previously considered, while on input $0$ each party outputs $0$ (as they must) and just sends out the physical system they received, unaffected. (Note that when acting on classical systems only, this is equivalent to the effect of the previous instrument $\{M_{0|0}\}$, so that $\{M'_{0|0}\}$ could also have been used to violate the previous two inequalities.)
With these instruments we found a process matrix $W_2$ giving
\begin{equation}
	I_2 \simeq -0.3367 < 0,
\end{equation}
which indeed violates Eq.~\eqref{ineq:2}.
$W_2$ cannot be expressed in the $Z$ basis alone and does not have a nice form so we do not write it explicitly here; it is included in full in the Supplemental Material~\cite{SM}.

It remains an open question to find the optimal violation of this inequality, and in particular whether it is possible to obtain a greater violation using higher dimensional systems.
For the violation of bipartite inequalities reported in Ref.~\cite{branciard16}, higher dimensional systems lead to greater violation for some inequalities, but the increased size of the tripartite optimization problem prevented us from investigating this possibility.

While the three tripartite inequalities we have focused on in this paper are of particular interest because of their simple form and their interpretation as natural generalizations of the LGYNI bipartite inequalities, a further search in fact showed that all $302$ nontrivial families of causal inequalities for the simplest tripartite scenario can be violated using the \emph{same} instruments given in Eq.~\eqref{eq:instruments2}.
Moreover, we found violations of all except $18$ families by completely classical process matrices (although often by less than with non-classical ones), and that the maximum algebraic violation can be obtained for $65$ families of inequalities (in all cases with classical process matrices and the instruments of Eq.~\eqref{eq:instruments1}). 
The processes violating the other inequalities, as well as the maximum violations we found, are included in the Supplemental Material~\cite{SM}.
We note that the instruments of Eq.~\eqref{eq:instruments2} are similar, but not equivalent to those used in the bipartite scenario in Ref.~\cite{branciard16}; those instruments can be used to violate some, but not all, of the tripartite causal inequalities, and often give smaller violations of the inequalities than the instruments of Eq.~\eqref{eq:instruments2}.
Finally, a process matrix (with a relatively simple form) and instruments violating inequality~\eqref{ineq:4} are also included in the Supplemental Material~\cite{SM}.

\subsection{Genuinely multipartite noncausal correlations}

An important question that arises with multipartite noncausal correlations---just as it does for nonlocal correlations---is whether the correlations are \emph{`genuinely $N$-partite noncausal'}, in the sense that no subset of the parties can have a definite causal relation to any other subset.
For the tripartite inequalities discussed above, one can test whether they can be violated by a process matrix in which one party (without loss of generality, $C$) causally follows (or precedes) the other two, but no definite causal order exists between $A$ and $B$, a scenario which we may denote $AB\prec C$ (or $C\prec AB$).
For the case $AB\prec C$ for instance, this formally corresponds to requiring that $C$ cannot signal to $A$ and $B$, which translates into the condition\footnote{For the case $C\prec AB$, the condition that $A$ and $B$ cannot signal to $C$ can also be expressed as a linear constraint on $W$, although it is a bit more complicated than~\eqref{eq:AB_prec_C}.}
\begin{equation}
	W = \frac{\id^{C^O}}{d_{C^O}}\otimes \tr_{C^O}W \,, \label{eq:AB_prec_C}
\end{equation}
where $d_{C^O}$ is the dimension of $\Hi^{C^O}$. This can easily be added as a linear constraint to the iterative SDP optimization algorithm, and by doing so we were able to find process matrices satisfying this constraint that violate many, but not all, of the tripartite causal inequalities.
For example, while no violation of Eq.~\eqref{ineq:1} was found, both Eqs.~\eqref{ineq:2} and~\eqref{ineq:3} can be violated (for qubit systems) with $I_2 = I_3 \simeq -0.2776 < 0$ by process matrices given in the Supplemental Material~\cite{SM} and the instruments in Eq.~\eqref{eq:instruments2}.
This violation is smaller than what we found for the unconstrained problem, but nonetheless shows that the inequalities $I_2, I_3 \ge 0$ do not detect genuinely tripartite noncausal correlations: a strictly lower bound than $0$ would need to be violated to detect those.

The study of genuinely multipartite nonlocal correlations has shown that there are several subtle issues associated with defining genuinely multipartite notions~\cite{Bancal:2013aa,Gallego:2012aa}.
Thus, while this situation is of great interest, we leave the problem of defining and characterizing genuinely multipartite noncausal correlations to future research.

\section{Conclusion}

The multipartite correlations that can be generated by localized operations within a definite causal order have a rich and complex structure. The fact that a party can, as a function of their input, influence the probabilities for the causal relations between parties in their future makes, in principle, the study of such correlations rather cumbersome. Here we have drastically simplified this problem, showing that all possible strategies can be reduced to deterministic ones---where both the outputs of and the causal relations between parties are functionally determined by the inputs of parties in their past---and probabilistic mixtures thereof.

Using this characterization, the set of causal correlations can be conveniently described as a convex polytope whose vertices represent deterministic strategies. This polytope can equivalently be characterized in terms of its facets, which define inequalities that have to be satisfied by the correlations. Taken together, all the inequalities (for a given scenario defining the parties and their sets of possible inputs and outputs) provide necessary and sufficient conditions for correlations to be realizable within a definite causal structure. We studied in detail the simplest tripartite scenario and found the corresponding complete set of facet inequalities. As this set is rather large---$13\,074$ inequalities, grouped into $305$ equivalence classes---we discussed explicitly only three examples. The full set may nonetheless be useful for systematic searches of violation of causal order. 

Finally, we found explicit violations of some exemplary inequalities within the process matrix formalism. Our results indicate that the violation of causal inequalities is rather ubiquitous in this framework, as for every nontrivial inequality considered it was possible to find a corresponding violation. The physical interpretation of such causal-inequality-violating processes, however, remains unclear. As it remains uncertain whether violation of bipartite causal inequalities is possible with physically realizable resources, it may be useful to narrow down the search for physical violations of causal inequalities by further developing the notion of `genuine multipartite non-causality'. Such a notion would allow one to single out causal inequalities whose violation cannot be obtained, for instance, by bipartite noncausal processes interacting causally with a third party.

\begin{acknowledgments}
We acknowledge discussions with Mateus Araújo, Časlav Brukner, Adrien Feix, and Ognyan Oreshkov.
This work was supported in part by the French National Research Agency (`Retour Post-Doctorants' program ANR-13-PDOC-0026), the European Commission (Marie Curie International Incoming Fellowship PIIF-GA-2013-623456), the ARC Centres for Engineered Quantum Systems (CE110001013) and for Quantum Computation and Communication Technology (CE110001027), and the Templeton World Charity Foundation (TWCF 0064/AB38).
\end{acknowledgments}

\appendix
\renewcommand{\thesection}{}
\renewcommand{\theequation}{A\arabic{equation}}
\setcounter{equation}{0}

\medskip

\section{Proofs of the claims of Subsection~\ref{subsec:a_few_comments}}

In this Appendix we prove the claims of Subsection~\ref{subsec:a_few_comments}. Just as for the definition of multipartite causal correlations, the proofs (except for the last one) are inductive.

\subsubsection{Convexity of causal correlations}

For $N=1$, it is obvious that (unconstrained) single-party correlations form a convex set.

\medskip

Let us assume that for $N \ge 2$, the set of $(N{-}1)$-partite causal correlations is convex, and let $P'$ and $P''$ be two $N$-partite causal correlations:
\ba
P'(\vec a|\vec x) &=& \sum_{k \in {\cal N}} \ q_k' \ P_k'(a_k|x_k) \ P_{k,x_k,a_k}'(\vec a_{\backslash k}|\vec x_{\backslash k}) \,, \\
P''(\vec a|\vec x) &=& \sum_{k \in {\cal N}} \ q_k'' \ P_k''(a_k|x_k) \ P_{k,x_k,a_k}''(\vec a_{\backslash k}|\vec x_{\backslash k}) \,, \qquad
\ea
with $q_k', q_k'' \ge 0$, $\sum_k q_k' = \sum_k q_k'' = 1$, with $P_k'$ and $P_k''$ being two valid probability distributions, and with $P_{k,x_k,a_k}'$ and $P_{k,x_k,a_k}''$ being $(N{-}1)$-partite causal correlations.

Let $q', q'' \geq 0$ with $q' + q'' = 1$, and define $P = q' P' + q'' P''$. Then one has
\ba
P(\vec a|\vec x) &=& q' P'(\vec a|\vec x) + q'' P''(\vec a|\vec x) \nonumber \\
&=& \sum_{k \in {\cal N}} q_k \, P_k(a_k|x_k) \ \Big( r_{x_k,a_k}' \, P_{k,x_k,a_k}'(\vec a_{\backslash k}|\vec x_{\backslash k}) \nonumber \\[-3mm]
&& \hspace{23mm} + \, r_{x_k,a_k}'' \, P_{k,x_k,a_k}''(\vec a_{\backslash k}|\vec x_{\backslash k}) \Big) , \ \qquad \label{eq:proof_convexity}
\ea
with $q_k = q' q_k' + q'' q_k''$, $q_k \, P_k(a_k|x_k) = q' q_k' \, P_k'(a_k|x_k) + q'' q_k'' \, P_k''(a_k|x_k)$, $r_{x_k,a_k}^{\prime(\prime)} = \frac{q^{\prime(\prime)} q_k^{\prime(\prime)} \, P_k^{\prime(\prime)}(a_k|x_k)}{q_k \, P_k(a_k|x_k)}$.
These are such that $q_k \ge 0$, $\sum_k q_k = 1$, $P_k(a_k|x_k)$ is a valid single-party probability distribution, $r_{x_k,a_k}^{\prime(\prime)}\ge 0$ and $r_{x_k,a_k}' + r_{x_k,a_k}'' = 1$ for all $x_k,a_k$. By the convexity of $(N{-}1)$-partite causal correlations, $P_{k,x_k,a_k}(\vec a_{\backslash k}|\vec x_{\backslash k}) := r_{x_k,a_k}' \, P_{k,x_k,a_k}'(\vec a_{\backslash k}|\vec x_{\backslash k}) + \, r_{x_k,a_k}'' \, P_{k,x_k,a_k}''(\vec a_{\backslash k}|\vec x_{\backslash k})$ is causal. Hence, Eq.~\eqref{eq:proof_convexity} is of the form~\eqref{def:causal_correlation}, so that $P = q' P' + q'' P''$ is causal, which by induction concludes the proof.
\qed

\subsubsection{Ignoring certain parties}

The claim, that any marginal correlation (for any nonempty subset ${\cal K}$ of parties) of an $N$-partite causal correlation is causal, is trivial to verify for the cases $N = 1$ and $2$, or when $|{\cal K}| = 1$ (recalling in particular that any single-partite correlation is causal).

\medskip

Suppose, for $N \ge 2$, that the claim holds for all marginals of all $(N{-}1)$-partite causal correlations, and consider an $N$-partite causal correlation $P(\vec a|\vec x)$, decomposed as in~\eqref{def:causal_correlation}, and a subset $\cal K$ of ${\cal N} = \{1, \ldots, N\}$, with $|{\cal K}| \ge 2$.

Let us first take each term
\ba
P_k(\vec a|\vec x) := \pr{k}{k}\, \prv{k,x_k,a_k}{\backslash k}
\ea
in the sum of~\eqref{def:causal_correlation} separately.
If $k \in {\cal K}$, then
\ba
\sum_{\vec a_{{\cal N} \backslash {\cal K}}} P_k(\vec a|\vec x) = \pr{k}{k}\, \sum_{\vec a_{{\cal N} \backslash {\cal K}}} \prv{k,x_k,a_k}{\backslash k} \,. \qquad \label{eq:proof_marginal_in_K}
\ea
By the induction hypothesis, the marginal 
$(|{\cal K}|{-}1)$-partite 
correlations $\sum_{\vec a_{{\cal N} \backslash {\cal K}}} \prv{k,x_k,a_k}{\backslash k}$ of the $(N{-}1)$-partite causal correlations $P_{k,x_k,a_k}$ are causal;
Eq.~\eqref{eq:proof_marginal_in_K} is thus of the form~\eqref{def:causal_correlation}, so that the marginal $|{\cal K}|$-partite correlation $\sum_{\vec a_{{\cal N} \backslash {\cal K}}} P_k(\vec a|\vec x)$ is itself causal.
If $k \in {\cal N} \backslash {\cal K}$ on the other hand, then
\ba
\sum_{\vec a_{{\cal N} \backslash {\cal K}}} \! P_k(\vec a|\vec x) = \sum_{a_k} \pr{k}{k}\!\! \sum_{\vec a_{{\cal N} \backslash {\cal K} \backslash k}} \!\!\! \prv{k,x_k,a_k}{\backslash k} \,. \qquad \label{eq:proof_marginal_in_N_K}
\ea
Again by the induction hypothesis, the marginal 
$|{\cal K}|$-partite 
correlations $\sum_{\vec a_{{\cal N} \backslash {\cal K} \backslash k}} \prv{k,x_k,a_k}{\backslash k}$ of the $(N{-}1)$-partite causal correlation $P_{k,x_k,a_k}$ are causal; Eq.~\eqref{eq:proof_marginal_in_N_K} then defines a convex combination of causal correlations, with normalized weights $\pr{k}{k}$, so that the marginal correlation $\sum_{\vec a_{{\cal N} \backslash {\cal K}}} P_k(\vec a|\vec x)$ is again causal.

Coming back to the sum of Eq.~\eqref{def:causal_correlation}, we thus find that the marginal $|{\cal K}|$-partite correlation
\ba
\sum_{\vec a_{{\cal N} \backslash {\cal K}}}\prv{}{} = \sum_{k \in {\cal N}} q_k\ \Big( \sum_{\vec a_{{\cal N} \backslash {\cal K}}} P_k(\vec a|\vec x) \Big)
\ea
is a convex combination of causal correlations, which implies that it is itself causal. By induction this concludes the proof of our claim on marginal causal correlations.
\qed

\subsubsection{Combining causal correlations `one after the other'}

Consider two nonempty sets of parties ${\cal K}$ and ${\cal N} \backslash {\cal K}$, and causal correlations $P(\vec a_{\cal K}|\vec x_{\cal K})$ and $P_{{\cal K},\vec x_{\cal K},\vec a_{\cal K}}(\vec a_{{\cal N} \backslash {\cal K}}|\vec x_{{\cal N} \backslash {\cal K}})$ for each set, respectively, and define
\ba
P(\vec a|\vec x) := P(\vec a_{\cal K}|\vec x_{\cal K}) \ P_{{\cal K},\vec x_{\cal K},\vec a_{\cal K}}(\vec a_{{\cal N} \backslash {\cal K}}|\vec x_{{\cal N} \backslash {\cal K}}) \,. ~\label{eq:combine_causal_proof}
\ea

\medskip

If the first set contains only $|{\cal K}| = 1$ party, then for any complementary subset ${\cal N} \backslash {\cal K}$ (that is, for any total number of parties $N$) the combined correlation $P(\vec a|\vec x)$ is clearly of the form~\eqref{def:causal_correlation}, and is therefore causal.

\medskip

For $|{\cal K}| \ge 2$, assume that for all subsets ${\cal K}'$ with $|{\cal K}'| = |{\cal K}|-1$ and all complementary subsets ${\cal N}' \backslash {\cal K}'$ (for any ${\cal N}' = \{1,\ldots,N'\}$, $N' > |{\cal K}'|$), the combination of two causal correlations as in~\eqref{eq:combine_causal_proof} yields another causal correlation. By assumption, the causal correlation $P(\vec a_{\cal K}|\vec x_{\cal K})$ under consideration can be decomposed in the form~\eqref{def:causal_correlation} as
\ba
P(\vec a_{\cal K}|\vec x_{\cal K}) = \sum_{k \in {\cal K}} q_k\ \pr{k}{k}\ \prv{k,x_k,a_k}{{\cal K}\backslash k}, \notag \\[-3mm]
\ea
where for each $k, x_k,a_k$, the $(|{\cal K}|{-}1)$-partite correlation $\prv{k,x_k,a_k}{{\cal K}\backslash k}$ is causal. We can then write
\ba
P(\vec a|\vec x) = \sum_{k \in {\cal K}} q_k\ \pr{k}{k}\ \prv{k,x_k,a_k}{\backslash k} \,, \qquad \label{eq:combine_causal_proof2}
\ea
where we defined
\begin{align}
& \hspace{-3mm} \prv{k,x_k,a_k}{\backslash k} \notag \\
& = \prv{k,x_k,a_k}{{\cal K}\backslash k} \ P_{{\cal K},\vec x_{\cal K},\vec a_{\cal K}}(\vec a_{{\cal N} \backslash {\cal K}}|\vec x_{{\cal N} \backslash {\cal K}})\, ,
\end{align}
which for each $k,x_k,a_k$, is the combination of a $(|{\cal K}|{-}1)$-partite causal correlation with an $(N{-}|{\cal K}|)$-partite causal correlation, as in~\eqref{eq:combine_causal_proof}. By the induction hypothesis, it is therefore causal. We thus find that the correlation $P(\vec a|\vec x)$ written as in~\eqref{eq:combine_causal_proof2} is of the form~\eqref{def:causal_correlation} and hence causal, which by induction concludes the proof.
\qed

\subsubsection{An equivalent characterization of causal correlations}

Clearly, Eq.~\eqref{def:causal_correlation} is a particular case of~\eqref{def:causal_correlation_v2}, so that any causal correlation (of the form~\eqref{def:causal_correlation} by definition) is also of the form~\eqref{def:causal_correlation_v2}. It remains to be shown that any correlation of the form~\eqref{def:causal_correlation_v2} is also causal.

\medskip

Consider such an $N$-partite correlation $P(\vec a|\vec x)$ of the form~\eqref{def:causal_correlation_v2}, with all $P_{\cal K}(\vec a_{\cal K}|\vec x_{\cal K})$ and $P_{{\cal K},\vec x_{\cal K},\vec a_{\cal K}}(\vec a_{{\cal N} \backslash {\cal K}}|\vec x_{{\cal N} \backslash {\cal K}})$ being causal correlations.
Consider for now one term in the sum~\eqref{def:causal_correlation_v2}, corresponding to one specific subset ${\cal K}$. 
If $|{\cal K}|=1$, then $P_{\cal K}(\vec a_{\cal K}|\vec x_{\cal K}) \ P_{{\cal K},\vec x_{\cal K},\vec a_{\cal K}}(\vec a_{{\cal N} \backslash {\cal K}}|\vec x_{{\cal N} \backslash {\cal K}})$ is directly of the form~\eqref{def:causal_correlation}, and hence it is causal. Otherwise, if $|{\cal K}| \ge 2$ then by the definition of causal correlations, $P_{\cal K}(\vec a_{\cal K}|\vec x_{\cal K})$ can be decomposed in the form~\eqref{def:causal_correlation}, as
\ba
P_{\cal K}(\vec a_{\cal K}|\vec x_{\cal K}) = \sum_{k \in {\cal K}} q_{k}\ P_{k}(a_{k}|x_{k})\ P_{k,x_k,a_k}(\vec a_{{\cal K}\backslash {k}}|\vec x_{{\cal K}\backslash {k}}) \nonumber \\[-3mm]
\ea
with $P_{k,x_k,a_k}(\vec a_{{\cal K}\backslash {k}}|\vec x_{{\cal K}\backslash {k}})$ causal, so that
\ba
&& P_{\cal K}(\vec a_{\cal K}|\vec x_{\cal K}) \ P_{{\cal K},\vec x_{\cal K},\vec a_{\cal K}}(\vec a_{{\cal N} \backslash {\cal K}}|\vec x_{{\cal N} \backslash {\cal K}}) \nonumber \\
&& \quad = \sum_{k \in {\cal K}} q_{k}\ P_{k}(a_{k}|x_{k})\ P_{k,x_k,a_k}(\vec a_{\backslash k}|\vec x_{\backslash k}) \,, \quad \label{eq:proof_equiv_form}
\ea
where we defined
\ba
&& P_{k,x_k,a_k}(\vec a_{\backslash k}|\vec x_{\backslash k}) \nonumber \\
&& \quad = P_{k,x_k,a_k}(\vec a_{{\cal K}\backslash {k}}|\vec x_{{\cal K}\backslash {k}}) \ P_{{\cal K},\vec x_{\cal K},\vec a_{\cal K}}(\vec a_{{\cal N} \backslash {\cal K}}|\vec x_{{\cal N} \backslash {\cal K}}) \,. \qquad
\ea
As a product of causal correlations, the correlation $P_{k,x_k,a_k}(\vec a_{\backslash k}|\vec x_{\backslash k})$ thus defined is causal (see the previous subsection). Hence, the product $P_{\cal K}(\vec a_{\cal K}|\vec x_{\cal K}) \ P_{{\cal K},\vec x_{\cal K},\vec a_{\cal K}}(\vec a_{{\cal N} \backslash {\cal K}}|\vec x_{{\cal N} \backslash {\cal K}})$ from Eq.~\eqref{eq:proof_equiv_form} is of the form~\eqref{def:causal_correlation}, and therefore it is causal.

\medskip

Put together, we find that a correlation of the form~\eqref{def:causal_correlation_v2} can be expressed as a convex combination of causal correlations (for each ${\cal K}$), and is therefore causal, which concludes the proof.
\qed

\bibliography{CausalCorrelations}

\end{document}